\newtheorem{theorem}{Theorem}
\newtheorem{lemma}{Lemma}
\newtheorem{definition}{Definition}
\newtheorem{proposition}{Proposition}
\DeclareMathOperator*{\argmin}{\arg\min}
\newcommand{\secref}[1]{Section \ref{#1}}
\newcommand{\agtodo}[1]{({\textcolor{red}{#1}})}
\newcommand{\yuda}[1]{{\textcolor{blue}{#1}}}
\newcommand{\sazulay}[1]{\textcolor{cyan}{\bfseries\{SA: #1\}}}
\newcommand{\Xomit}[1]{}
\newcommand\remove[1]{}
\title{Holdout SGD: Byzantine Tolerant Federated Learning}
\author{
 Shahar Azulay \\
  The Blavatnik School of Computer Science \\
  Tel Aviv University, Israel \\
  \texttt{shaharazulay@mail.tau.ac.il} \\
   \And
 Lior Raz \\
  The Blavatnik School of Computer Science\\
  Tel Aviv University, Israel\\
  \texttt{liorraz@mail.tau.ac.il} \\
   \And
 Amir Globerson \\
  The Blavatnik School of Computer Science\\
  Tel Aviv University, Israel\\
   \And
 Tomer Koren \\
  The Blavatnik School of Computer Science\\
  Tel Aviv University, Israel\\
  \texttt{tkoren@tauex.tau.ac.il} \\
   \And
 Yehuda Afek \\
  The Blavatnik School of Computer Science\\
  Tel Aviv University, Israel \\
  \texttt{afek@tauex.tau.ac.il} \\
}
\begin{document}
\maketitle
\begin{abstract}
This work presents a new distributed Byzantine tolerant federated learning algorithm, \textit{HoldOut SGD}, for Stochastic Gradient Descent (SGD) optimization.
\textit{HoldOut SGD} uses the well known machine learning technique of holdout estimation, in a distributed fashion, in order to select parameter updates that are likely to lead to models with low loss values. 
This makes it more effective at discarding Byzantine workers inputs than existing methods that eliminate outliers in the parameter-space of the learned model. 
\textit{HoldOut SGD} first randomly selects a set of workers that use their private data in order to propose gradient updates. 
Next, a voting committee of workers is randomly selected, and each voter uses its private data as holdout data, in order to select the best proposals via a voting scheme. 
We propose two possible mechanisms for the coordination of workers in the distributed computation of \textit{HoldOut SGD}. The first uses a truthful central server and corresponds to the typical setting of current federated learning. The second is fully distributed and requires no central server, paving the way to fully decentralized federated learning. 
The fully distributed version implements \textit{HoldOut SGD} via ideas from the blockchain domain, and specifically the Algorand committee selection and consensus processes.
We provide formal guarantees for the \textit{HoldOut SGD} process in terms of its convergence to the optimal model, and its level of resilience to the fraction of Byzantine workers. The analysis assumes convexity of the loss, but the method is applicable to general learning scenarios, including deep-learning. 
Empirical evaluation shows that \textit{HoldOut SGD} is Byzantine-resilient and efficiently converges to an effectual model for deep-learning tasks, as long as the total number of participating workers is large and the fraction of Byzantine workers is less than half ($<1/3$ for the fully distributed variant). We show trade-offs between the fraction of Byzantine workers, the convergence confidence of the algorithm and the number of participating workers.
\end{abstract}

\section{Introduction}
\label{intro}
Advancements in machine learning have recently introduced the \textit{Federated Learning} (FL) paradigm \cite{McMahan2016CommunicationEfficientLO, Konecn2018FederatedLS, Kairouz2019AdvancesAO}. FL is a distributed learning framework where the data used to train the model is distributed across many user devices (e.g., mobile phones or personal computers). The defining characteristic of FL is that data is not transmitted from the devices. This is motivated by several factors such as privacy concerns, load sharing of computation, and communication efficiency. Rather than copying data, a centralized \textit{parameter server} \cite{Li2014ScalingDM} orchestrates a distributed process by which the workers collaborate and communicate to learn a model \cite{Zinkevich2010ParallelizedSG,Feng2014DistributedRL,Alistarh2018ByzantineSG}. At a high-level, FL progresses in synchronous epochs, each beginning with the server sending the current model parameters to all workers, which then locally compute, based on their private data, a new local gradient update. Each worker then sends its update to the server, which aggregates all updates to compute a new set of model parameters, typically via a Stochastic Gradient Descent (SGD) step. This process is repeated until a termination condition is reached. %(e.g., until the new model does not change much relative to the previous).

When the learning process is outsourced to a large crowd, as in FL, it becomes vulnerable to malicious or faulty workers that (deliberately or unintentionally) cause the process to converge to an ineffectual model. Several approaches have been suggested to provide Byzantine fault tolerance in FL \cite{Alistarh2018ByzantineSG,Blanchard2017MachineLW}. All the approaches that have been suggested thus far, to the best of our knowledge, work in the high-dimensional parameter-space of the learned joint model. That is, they discard outlier updates that are far from the ``center'' of all proposed updates (e.g., their mean \cite{Blanchard2017MachineLW} or trimmed-mean \cite{Yin2018ByzantineRobustDL}). While this approach is intuitively reasonable, the selection criterion is not directly related to the loss (and in turn, to the accuracy) of the resulting model. Consequently, Byzantine nodes can propose updates that appear valid but result in a poor model after several iterations \cite{Mhamdi2018TheHV}. 
%They can slowly drift the ``center'', and thus the model, while the ``center'' ``walks'' away with the malicious nodes.\agtodo{Yehuda, can you clarify this?}\yuda{metaphorically, but if not clear, then remove}%
Formally, \cite{Blanchard2017MachineLW} showed that no aggregation rule based on a linear combination of the updates proposed by the workers tolerates a single Byzantine worker.
This notion led to the introduction of many non-linear aggregation rules, aiming to provide the backbone for a Byzantine-resilient distributed SGD \cite{Chen2017DistributedSM, Blanchard2017MachineLW, Yin2018ByzantineRobustDL, Mhamdi2018TheHV, Alistarh2018ByzantineSG, Xie2018GeneralizedBS, Su2016FaultTolerantMO, Feng2014DistributedRL}.
These methods were recently shown to break under crafted perturbations of the parameter updates, leveraging the high-dimensionality of the learning task and the non–convexity of the target loss function.

Inspired by distributed consensus algorithms and Algorand blockchain committee usage \cite{Chen2019AlgorandAS}, here we suggest a new approach we call \textit{HoldOut SGD}, to eliminate the effect of Byzantine updates. Rather than evaluating a gradient based on geometric criteria, we directly choose gradients based on their contribution to optimizing the accuracy of the model. Our approach is based on the well known holdout method in learning (e.g., \cite{Hastie2005TheEO, Kohavi1995ASO}), where a new dataset is used to choose between different candidate models. Here we use holdout in a very different context, for choosing between different gradient updates proposed by different workers. \textit{HoldOut SGD} uses two randomly selected subsets of workers. The first set of workers, called \emph{proposers}, proposes gradient updates similarly to traditional workers in FL. The second set of workers, called the \emph{voting committee}, is responsible for carefully selecting a subset of the proposed updates, based on the data held by the voting committee members and a voting scheme. Thus, at each round of the algorithm, the data used by the committee can be viewed as holdout data for that round.

We demonstrate the quality and robustness of our scheme both theoretically and empirically. First, we provide convergence guarantees and convergence rates for the algorithm. Second, we show that it is Byzantine-resilient, in the sense that the votes cast by honest voting committee members are resilient to the corrupted updates generated by Byzantine workers, resulting in convergence under a Byzantine fraction $f < 0.5$. We study the probability of the voting committee being compromised by a majority of Byzantine nodes, and derive the size of committee needed to ensure the convergence of the algorithm, with high probability.
While for the theoretical proofs we make some relaxing assumptions (most notably convexity) in the empirical evaluation of \textit{HoldOut SGD} we consider a typical deep-learning scenario, and show that it withstands attacks on which previous approaches failed. 
% We prove several important theoretical properties of \textit{HoldOut SGD}.

{\bf Paper Organization:} 
The rest of the paper is organized as follows. Section~\ref{background} provides background on the distributed learning setting with Byzantine workers, and on existing Byzantine-resilient algorithms for distributed SGD. 
Section~\ref{our-algorithm} introduces our algorithm, \textit{HoldOut SGD}.
\secref{theroy} provides theoretical claims for \textit{HoldOut SGD}, while \secref{results} describes empirical results comparing \textit{HoldOut SGD} to other Byzantine-resilient methods for distributed learning.
Finally, \secref{conclusions} gives some concluding remarks.
%and discusses future work.

\section{Background}
\label{background}

\subsection{Distributed Learning with SGD}
\label{distributed_sgd_background}
We follow the distributed model described in \cite{Li2014ScalingDM}, which includes a centralized parameter server, $n$ worker nodes, among which a fraction $f \in [0, 1]$ can be Byzantine nodes.
We assume the Byzantine nodes exhibit arbitrary behavior as individuals, and also have the ability to operate in coalition.

Let $X_1,\ldots,X_m$ be a set of $m$ i.i.d.~samples representing the training set, where $m$ is large (e.g., each $X_i$ is a labeled image).
Let $w$ represent the parameters of a machine learning model to be trained on $X$ (e.g., the model maps between an image and its label). Consider a loss function $L(X; w)$ that measures the discrepancy between the model $w$ and the given example $X$.\footnote{For example, if $X$ is a labeled image and $w$ is a classification model then $L$ is the cross-entropy loss of the model on the labeled image.}

The goal of the training process is to find a model $w^*$ that minimizes the training loss $L(X; w)$ defined as:%
\footnote{Of course, the true goal of learning is to find a model that generalizes well; our analysis can be adapted to give similar bounds for the test loss, but for simplicity, here we focus on training optimization.}
$$L(X; w) = \frac{1}{m}\sum_{i=1}^m L(X_i ; w).$$
% We wish to find a model $w^*$ that minimizes $L(X;w)$. 
In the context of Federated Learning this is done via a variant of Stochastic Gradient Descent defined as follows (see Algorithm~\ref{alg:distributed_sgd}): Each worker $i \in \{1, ..., n\}$\ holds $m_i$\ samples ($\sum_{i=1}^{n} m_i = m$), drawn randomly from the dataset. Let $t$ represent the current epoch, $w_t$ be the learned model parameters at epoch $t$, and $d$ be the dimension of the parameter space $w$.

At epoch $t$, the parameter server randomly selects $P_t$, a group of $N$ nodes (where usually $N \ll n$) as participants in the current round. Each selected node $i \in P_t$ updates its model parameters to $w_t$ (received from the parameter server at the end of epoch $t-1$), randomly draws $B_t^i$ which consists of $B$ samples (also called a ``mini-batch'') from its internal dataset and reports the relevant gradient to the parameter server.
Formally, denote the loss function over the selected mini-batch of node $i$ by $L_i(w_t)$ so that:

$$L_i(w_t) = \frac{1}{B} \sum_{j \in B_t^i} L(X_j; w),$$
and $\nabla L_i(w_t)$ is the gradient of $L_i(w_t)$, reported to the parameter server.

The parameter server then runs an aggregation rule $\mathbb{A}$ over all reported gradients and decides on the resulting update it should take. We denote this selected update by $v_t$.
Using a learning rate $\eta_t>0$, the parameter server takes a step in the direction of the selected update $w_{t+1} \leftarrow w_t - \eta_t v_t$ and reports the updated parameters $w_{t+1}$ to all nodes, declaring the end of the training epoch.

\begin{algorithm}[tb]
   \caption{Distributed SGD: code for parameter server and workers}
   \label{alg:distributed_sgd}
    \SetKwProg{DistributedSGD}{Procedure \emph{DistributedSGD}}{}{end}
    \SetKwProg{getUpdate}{Function \emph{getUpdate}}{}{end}
    \DistributedSGD{($T$, $n$, $w_1$): \text{\# performed by the parameter server}}{
       \For{epoch $t \in [T]$} {
       $P_t$ $\leftarrow$ Select $N$ nodes uniformly at random\;
       \For{node i $\in$ $P_t$} {
            $\nabla L_{i}(w_t)$ $\leftarrow$ $i.getUpdate(w_t)$\;
       }
       $v_t \leftarrow \mathbb{A}(\nabla L_{i_1}(w_t), ..., \nabla L_{i_N}(w_t))$\; 
       $w_{t+1} \leftarrow w_t - \eta_t v_t $\;
       }
   }
%   \\
   \getUpdate{($w_t$): \text{\# performed by the participant worker}}{ 
      Draw $B$ samples from internal dataset uniformly at random\;
      \Return gradient $\nabla L_i(w_t)$ to server\;
   }

\end{algorithm}

It can be seen that if the aggregation rule $\mathbb{A}$ is an average of all reported gradients then Algorithm~\ref{alg:distributed_sgd} is equivalent to the case of centralized SGD \cite{BottouATStochasticGL} with a mini-batch size of $N B$.

\subsection{Existing Byzantine-Resilient Aggregation Rules}
\label{existing-rules}
As mentioned earlier, Distributed SGD can be very sensitive to Byzantine nodes. Several approaches have been suggested to address this difficulty. These differ mostly in the way they perform aggregation over the participant gradient proposals. 
Existing aggregation rules work by removing gradients that are ``far'' from the mean (or other notions of set center), treating them as adversarial outliers. For example, the Krum method \cite{Blanchard2017MachineLW} returns the set of gradients that has the smaller radius (measured in $L_2$ in parameter space). 
Similarly, the \textit{Coordinate-wise Trimmed Mean} method \cite{Yin2018ByzantineRobustDL} uses an aggregation which evaluates a robust mean around the median.
In \cite{Baruch2019ALI} it was shown that these approaches are in fact prone to attacks where small directed changes to many parameters can take advantage of the non-convexity of the loss function, causing the learning process to converge into an ineffectual model.
\remove{
As mentioned earlier, Distributed SGD can be very sensitive to Byzantine nodes. Several approaches have been suggested to address this difficulty. These differ mostly in the way they perform aggregation over the participant gradient proposals. 
Existing aggregation rules work by removing gradients that are ``far'' from the mean (or other notions of set center), treating them as adversarial outliers. However, \cite{Baruch2019ALI} recently showed that this approach is limited, and in fact small directed changes to many parameters can take advantage of the non-convexity of the loss function, causing the learning process to converge into an ineffectual model.

Below we cover some of the highlights of recent Byzantine-resilient aggregation rules, focusing on \textit{Krum} \cite{Blanchard2017MachineLW} and \textit{Coordinate-wise Trimmed Mean} \cite{Yin2018ByzantineRobustDL}. The current State-of-the-art is the \textit{Bulyan} method \cite{Mhamdi2018TheHV}, which is the evolution of both. It combines both aggregation algorithms and was shown in \cite{Baruch2019ALI} to suffer from similar vulnerabilities. Furthermore, \textit{Bulyan} provide theoretical guarantees to be Byzantine-resilient only up to $f < \frac{1}{4}$. Therefore, we chose to not include it in the empirical results of this paper.

\subsubsection{\textbf{Krum}}
Krum \cite{Blanchard2017MachineLW} is a majority-based approach, which considers every subset of $N(1 - f)$ vectors, and selects the one with the smallest diameter. It uses $L_2$ norm in the parameter space, in order to evaluate the distance between different proposed parameter gradients.
Krum chooses a single parameter gradient which minimizes the sum of squared distances to its  $N(1 - f) - 1$ closest gradient vectors. By doing so Krum aims to find a single honest node, which is also a good representative of the target loss function. Formally, the aggregation operator is given by:
$$\mathbb{A}_{Krum}(\nabla L_1(w_t), ..., \nabla L_N(w_t)) = \nabla L_i(w_t)$$
where 
$i = \argmin_{k \in [N]} \sum_{k \rightarrow j} \| \nabla L_k(w_t) -  \nabla L_j(w_t)\|^2$
and $i \rightarrow j$ denotes the fact that $\nabla L_j(w_t)$ belongs to the $N(1 - f) - 1$ closest vectors to $\nabla L_i(w_t)$.
\\
\\
\subsubsection{\textbf{Coordinate-wise Trimmed Mean}} In \cite{Yin2018ByzantineRobustDL} it is proposed to use the robust mean when removing elements farthest away from the median, per coordinate of the parameter gradient vector. Namely it aggregates using:
$$\mathbb{A}_{TrimmedMean}(\nabla L_1(w_t), ..., \nabla L_N(w_t)) = \Vec{V}$$
where:
$$(\Vec{V})_j = \frac{1}{|U_j|} \sum_{i | \nabla (L_i(w_t))_j \in U_j} \nabla (L_i(w_t))_j$$
And $U_j = \{x^1_j, ..., x^m_j\}$, representing the $N - m$ closest elements to the median of coordinate $j$.

Both \textit{Krum} and \textit{Coordinate-wise Trimmed Mean} aggregation rules are designed to be Byzantine-resilient up to $f < \frac{1}{2}$.
}
\section{HoldOut SGD}
\label{our-algorithm}
\remove{
We now introduce \textit{HoldOut SGD}, a novel aggregation rule, which uses a committee of ``hold-out'' estimators to choose among a set of proposed gradient updates.
We next formulate the \textit{HoldOut SGD} as a fully distributed learning scheme. \agtodo{Does this refer to Algorithm 3? If yes, sentence shoudld be rewritten since we first present Algorithm 2.}

In the following section we formally analyze its convergence and show that it is Byzantine-resilient for $f < \frac12$, using a semi-distributed variant of the algorithm where one of the nodes takes the position of the parameter server, a standard entity in federated learning.

\sazulay{Yehuda / Amir - this short intro make sense?. I think it does but rephrase as you think is better.}
}
\remove{
Two functionally equivalent presentations of  \textit{HoldOut SGD} are provided here, a fully distributed form, and a semi-distributed one using a centralized parameter server, that communicates with all the workers. Using the parameter server is more standard in AI, and reduces the communication complexity in each iteration from quadratic to linear. \remove{$O(n^2)$ to $O(n)$} Furthermore, the parameter server is considered to be a reliable truthful server, and thus can simplify the committees members selection process.} 

\remove{
\subsection{The Fully Distributed Algorithm}
\yuda{This algorithm is in a sense more ``democratic", after initialization there is no central server that manages the process and each worker independently computes the resulting model.  We borrow the committees selection mechanism suggested in Algorand 
}}

\subsection{The Algorithm}
An epoch of \textit{HoldOut SGD} begins like regular DistributedSGD, where at each epoch $t$ a set of ``proposer'' nodes $P_t$ are randomly selected, and calculate their gradients based on their private data. 
Next, \textit{HoldOut SGD} randomly selects a subset of the workers as a voting committee, which we refer to as $C_t$.  Each committee member then evaluates the proposed updates on its own private dataset, to obtain a direct estimate of the loss function. %which allows it to gain a direct evaluation of each proposed update and its contribution towards a low-loss model.
The intuition behind this approach is that the private data of a (honest) committee member can be used to evaluate the different proposed model updates. In particular, model updates proposed by Byzantine nodes are less likely to work well on the private data of a committee member.

\remove{
We show in Section~\ref{theroy} that with high probability, with a large enough committee, the committee is populated by a majority of honest nodes, 
under the random selection process.
}

%Let $N$ be the number of selected proposers (as defined in \secref{distributed_sgd_background}) and $N_c$ be the number of selected committee members, at each round.
The core of the algorithm is the use of the voting committee for generating a model update out of the proposed updates, which we explain next.
Let $N_p$ denote the number of proposer nodes, and $N_c$ the number of voting committee nodes.
%
%The algorithm proceeds as follows. In epoch $t+1$, the parameter server randomly selects $N_p$ nodes (where usually $N_p \ll n$) as proposers, and $N_c$ nodes (where usually $N_c \ll n$) as voting committee members, for the current round. 
At round $t$, each proposer $i \in P_t$ reports the gradient $\nabla L_i(w_t)$ to the parameter server using a mini-batch of size $B$, as in the standard Distributed SGD algorithm (see Algorithm~\ref{alg:distributed_sgd}, \secref{distributed_sgd_background}).\footnote{We assume the Byzantine nodes exhibit arbitrary behavior as individuals, and also have the ability to operate in coalition.}
The parameter server then sends the $N_p$ received proposals to each voting committee member. Each voter \remove{ing committee member} $c \in C_t$ then calculates the loss value of each of the $N_p$ updates, on its own private data, namely, it calculates $L_c(w_t - \eta \nabla {L_j}(w_t))$ for $j\in P_t$. 
This set of $N_p$ values is then sorted and the indices of the $N_p(1-f)$ smallest values (best ranked) are returned. In other words, committee member $c$ returns the $N_p(1-f)$ most promising proposals,\remove{as judged according} relative to its private data.  
We use $\mathbb{A}_{Holdout}$ to refer to the function that takes the $N_p$ proposals and returns the $N_p(1-f)$ indices.
\remove{
which in turn runs a holdout evaluation over all proposed gradients based on $m_c$ random samples from its private data.  We refer to the evaluations of each voting committee member as \textit{votes} over the proposed updates, and denote the holdout evaluation function by $\mathbb{A}_{Holdout}$ (see Algorithm~\ref{alg:holdout_sgd}).
\agtodo{changed to $N_p,N_c$ below as needed} To calculate $\mathbb{A}_{Holdout}$ each voting committee member $c$ evaluates $L_c(w_t - \eta \nabla {L_j}(w_t))$, the loss over the random $m_c$ samples internally held by $c$ over each of the reported proposer updates $j \in [N]$. The committee member then votes for the top $N(1 - f)$ proposals. Namely, those who provided an update that reduced the loss function $L_c$ the most. Each voter sends its votes to the central server as an unsorted list of the corresponding worker (proposers) ids.
}

In the final step, the parameter server integrates the information from all committee members. This is done via a procedure $\mathbb{A}_{Consensus}$ that takes as input all $N_c$ votes, and proposed gradients, finds a set of proposals, each of which has\remove{that} received a sufficient number of votes from committee members, and returns their average. See \secref{sec:unioncon} for details.
\remove{
The parameter server then runs a second aggregation rule $\mathbb{A}_{Consensus}$, as described in \secref{sec:unioncon}, and reports the resulting new updated model parameters $w_{t+1}$ to all workers and starts the next epoch.
}
\begin{algorithm}[tb]
   \caption{HoldOut SGD: code for parameter server \& workers}
   \label{alg:holdout_sgd}
   
   \SetKwProg{HoldOutSGD}{Procedure \emph{HoldOutSGD}}{}{end}
   \SetKwProg{getUpdate}{Function \emph{getUpdate}}{}{end}
   \SetKwProg{getVotes}{Function \emph{getVotes}}{}{end}
   \HoldOutSGD{($T$, $n$, $w_1$): \ \ \  \text{\# performed by the parameter server}}{
       \For{epoch t $\in$ [T]}{
        $P_t$ $\leftarrow$ Select $N_p$ nodes uniformly at random\;
        $C_t$ $\leftarrow$ Select $N_c$ nodes uniformly at random\;
       \For{node i $\in$ $P_t$} {
            $\nabla L_i(w_t)$ $\leftarrow$ $i.getUpdate(w_{t})$

       }
       \For{node c $\in$ $C_t$} {
            ${Votes}_c$ $\leftarrow$ $c.getVotes(\nabla L_{i_1}(w_t), ..., \nabla L_{i_{N_p}}(w_t))$
       }
    %   \STATE
       $v_t \leftarrow \mathbb{A}_{Consensus}(\{{Votes_{c_1}, ..., Votes_{c_{N_c}}}\}, \{\nabla L_{i_1}(w_t), ..., \nabla L_{i_{N_p}}(w_t)\})$\;
       $w_{t+1} \leftarrow w_t - \eta_t v_t$\;
       }
   }
%   \\
   \getUpdate{($w_{t}$): \ \ \  \text{\# performed by the proposer worker}}{
      Draw $B$ samples uniformly at random from internal dataset\;
      \Return gradient $\nabla L_i(w_t)$ to server\;
   }
%   \\
   \getVotes{($\nabla L_{i_1}(w_t), ..., \nabla L_{i_{N_p}}(w_t)$): \ \ \ \text{\# performed by the committee member worker}}{
        Draw $m_c$ samples uniformly at random from internal dataset\;
        ${Votes} \leftarrow \mathbb{A}_{Holdout}(\nabla L_{i_1}(w_t), ..., \nabla L_{i_{N_p}}(w_t))$\;
        \Return $Votes$ to server\;
   }
\end{algorithm}

It can be seen that if the aggregation rule $\mathbb{A}_{Holdout}$ is simply a random choice (each committee member selects an update from one proposer at random) and the aggregation rule $\mathbb{A}_{Consensus}$ is an average over all updates that received any votes, then Algorithm~\ref{alg:holdout_sgd} is equivalent to the \remove{case of} centralized SGD with a mini-batch size of $N_c B$.

\subsection{The Union-Consensus}
\label{sec:unioncon}
\remove{
In the complete implementation of \textit{HoldOut SGD} we use $N_c \gg 1$ to ensure with high probability an honest representation in the committee (see Section~\ref{theroy}).\agtodo{this is a bit out of context. You should mention it where we need the majority.}
\remove{Therefore, the parameter server needs to reduce many votes collected from committee members into a single update $v_t$ it eventually takes.}
}
\remove{
Once all the votes are collected by the parameter server, it then runs the aggregation rule $\mathbb{A}_{Consensus}$ to reach a final update for the current epoch.
$\mathbb{A}_{Consensus}$ is constructed as follows. The parameter server accumulates the number of votes received (by different committee members) for each proposer. The server then selects the proposers that received at least $N_c (1 - f)$ votes. We refer to this group as the \textit{Union-Consensus} of round $t$ (or $UC_t$) . The final update is the average of the updates received by all proposers in the \textit{Union-Consensus}.}

To calculate $\mathbb{A}_{Consensus}$ the parameter server selects the proposers that have received at least $N_c (1 - f)$ votes. We refer to this group as the \textit{Union-Consensus} of round $t$ (or $UC_t$). The output of $\mathbb{A}_{Consensus}$ is simply the vector $v_t$ that is the average of vectors in $UC_t$, namely:
%the average of the updates received by all proposers in $N_c (1 - f)$.
\begin{align} \label{eq:nable_L}
 v_t = \frac{1}{|UC_t|}\sum_{i \in UC_t} \nabla L_i(w_t) .
\end{align}
In order to show that this is well defined, we show below that the set $UC_t$ cannot be empty.
\begin{lemma}
The Union-Consensus group cannot be an empty set.
\end{lemma}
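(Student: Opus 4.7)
My plan is to prove non-emptiness of $UC_t$ by a simple averaging (pigeonhole) argument on the total number of votes cast in round $t$.

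First, I would count the total number of votes cast by the committee. By the definition of $\mathbb{A}_{Holdout}$, each of the $N_c$ voting committee members returns exactly $N_p(1-f)$ indices (its top-ranked proposals on its private holdout data). Hence the total number of (proposer, voter) pairs recorded across the whole committee is
\begin{align*}
\sum_{c \in C_t} |\text{Votes}_c| \;=\; N_c \cdot N_p(1-f).
\end{align*}
These votes are partitioned over the $N_p$ proposers in $P_t$, so if $v_i$ denotes the number of committee members that voted for proposer $i$, we have $\sum_{i \in P_t} v_i = N_c \, N_p (1-f)$.

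Next I would conclude by averaging. The average number of votes received by a proposer is exactly $\frac{1}{N_p}\sum_{i \in P_t} v_i = N_c(1-f)$, so at least one proposer $i^\star \in P_t$ must satisfy $v_{i^\star} \ge N_c(1-f)$. By the definition of the Union-Consensus, $i^\star \in UC_t$, and therefore $UC_t \neq \emptyset$, so $v_t$ in \eqref{eq:nable_L} is well defined.

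\textbf{Potential obstacle.} The only subtlety I anticipate is integrality: the argument treats $N_p(1-f)$ and $N_c(1-f)$ as the actual number of votes each voter casts and the threshold for membership in $UC_t$, respectively. If these quantities are not integers one should replace them by the corresponding floors/ceilings consistently (take each voter to return $\lfloor N_p(1-f)\rfloor$ indices and require at least $\lceil N_c(1-f)\rceil$ votes, or vice versa); the same averaging argument then still yields a proposer meeting the threshold, so the conclusion is unaffected. Beyond this bookkeeping, there is no real difficulty, since the statement does not rely on honesty of voters or on any Byzantine-resilience property — it is a purely combinatorial consequence of how many votes each committee member is required to cast.
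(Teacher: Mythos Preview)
Your proposal is correct and essentially identical to the paper's proof: both count the total number of votes as $N_c \cdot N_p(1-f)$ and apply pigeonhole to conclude some proposer receives at least $N_c(1-f)$ votes, with the paper simply phrasing the same counting argument by contradiction rather than via averaging. Your remark on integrality is a nice addition that the paper does not address.
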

\begin{proof}
By contradiction; assume that the \textit{Union-Consensus} is empty. Then no proposer has been voted for by at least $N_c(1 - f)$ committee members. 
Let $|V|$ denote the total number of votes received for all the proposers. Then it follows that $|V| < N_p \cdot N_c(1 - f)$.
However, each of the $N_c$ committee members must cast $N_p(1 -f)$ unique votes, meaning that $|V| = N_c \cdot N_p(1 -f)$, leading to a contradiction. 
\end{proof}
The intuition behind the $\mathbb{A}_{Consensus}$ construction is that, first, the honest voters vote for the top $N_p(1 - f)$ proposers, not vouching for updates that are expected to be least effectual when evaluated over the private data held by each voter.
This approach is similar to the construction of the score in \textit{Krum} \cite{Blanchard2017MachineLW} and in \textit{Trimmed Mean} \cite{Yin2018ByzantineRobustDL}.
Second, in order to reach a consensus, only proposers that receive at least $N_c(1 - f)$ votes are taken into account, making sure that if a Byzantine node is selected it has to ``win the trust'' of $N_c(1 - 2f)$ honest committee members (since it can create a coalition with the rest of the $f$ Byzantine nodes). Notice, that the random selection of a new set of proposers and voting committee at each iteration, prevents the Byzantine nodes\remove{cannot adapt} from trying to resemble to a particular truthful node or set of nodes \remove{  or evolve from round to round}that tend to lead to a less effectual joint model.

As an example, consider the case where the ratio of Byzantine nodes in the population is $f = \frac{1}{3}$. In this case $\frac{2}{3}N_c$ of the committee members are expected to be honest and a Byzantine node would need to convince at least $N_c(1 - 2f) = \frac{1}{3}N_c$ honest committee members, meaning at least half of the honest committee members, that it is better (when evaluated over their true internal data) than other honest proposers.

This structure constrains the Byzantine workers from controlling the proposers or voting committee. Since these groups are chosen at random in each round, planning ahead or controlling the selection process is not possible.

Finally, note that for the case $f = 0$, \textit{HoldOut SGD} reduces to centralized SGD with a mini-batch size of $N_p \cdot B$, since the committee has no effect in this case.

\subsection{A Fully Decentralized Implementation \label{sec:fully}}

In most practical situations a central server exists and the above distributed \textit{HoldOut SGD} algorithm provides a good solution.  However, if workers do not trust the central server, or the server may fail, a fully distributed implementation, without a central parameter server, is then desired. A simple approach is for each worker to perform the central server algorithm locally. Each committee member (proposer, or voter) instead of sending its gradient proposal, or votes, to the central server, broadcasts its message to all workers, thus increasing the communication complexity from $O(N_p +N_c)$ to $O(n N_p + n N_c)$.  

There are still a few difficulties that have to be overcome. First, members of each of the different committees need to be randomly selected without the Byzantine workers being able to affect the selection, or predict the set of workers to be chosen (otherwise they might DDoS them). Secondly, since the Byzantine voters may send different votes to different sets of workers, they may cause the honest consensus workers to compute slightly different sets of model parameters, each being legal (see Algorithm \ref{alg:holdout_sgd_fully_decentralized}). In order to mimic a \textit{HoldOut SGD} central server operations the honest workers should agree on the same set of model parameters at the end of each epoch. These difficulties resemble the blockchain model (in some sense also the State Machine Replication, but SMR is not applicable here). Hence we borrow techniques from the blockchain domain, specifically the Algorand committee selection and consensus algorithm \cite{Chen2019AlgorandAS} to coordinate the operations of the workers in the same way as nodes in Algorand coordinate and synchronize.

Following \cite{Chen2019AlgorandAS}, we assume the setup in which each worker has selected a pair of public and secret cryptography keys, and a SHA256 random oracle hash function has been agreed upon, before the algorithm starts. After the setup has been established and shared by all workers, a procedure to select a random string $S_1$ is invoked, to be used in the committees membership selection process. This and the requirement for a consensus  step, makes the fully distributed algorithm resilient to $f < 1/3$ Byzantine workers rather than $< 1/2$ in the semi-distributed variant.  A pseudo algorithm is provided in Algorithm \ref{alg:holdout_sgd_fully_decentralized}.

\begin{algorithm}[t]
   \caption{Decentralized HoldOut SGD: code for worker $i$, $i \in [n]$}
   \label{alg:holdout_sgd_fully_decentralized}
   
   \SetKwProg{DecentralisedHoldOutSGD}{Procedure \emph{DecentralisedHoldOutSGD}}{}{end}
   \SetKwProg{getUpdate}{Function \emph{getUpdate}}{}{end}
   \DecentralisedHoldOutSGD{($T$, $w_1$):}{
       $w^i_1 \leftarrow w_1$\;
       \For{epoch $t \in [T]$} {  
              \ \ \ \ \ \ \ \ {***--- Proposing round ---  ***}\;
        $proposer$ $\leftarrow$ ($H(S_t,``P",Sign_{sk_i}(t)) < q_1$) \Comment{Using random oracle $H$ (SHA256), $sk_i$ worker $i$'s secret key, $q_1$ probability of being a proposer}\; 
       \If{$proposer$} {
            \textbf{Broadcast} ($getUpdate(w^i_{t})$,$Sign_{sk_i}(t)$)

       }
       \ \ \ \ \ \ \ \ {***--- Voting round ---  ***}\;
        $voter$ $\leftarrow$ ($H(S_t,``V",Sign_{sk_i}(t)) < q_2$)\Comment{ $q_2$ probability of being a voter}\;        
       \If{$voter$}{ 
            Receive $(\{\nabla L_{j_1}(w_t^{j_1}),Sign_{sk_{j_1}}(t)),\ldots, (\nabla L_{j_{N'}}(w_t^{j_{N'}}),Sign_{sk_{j_{N'}}}(t)\})$\;
            Verify received proposed gradients\;
            ${Votes} \leftarrow \mathbb{A}_{Holdout}(\nabla L_{j_1}(w_t^{j_1}), ..., \nabla L_{j_{N'}}(w_t^{j_{N'}}))$\;
            \textbf{Broadcast} ($Votes$, $Sign_{sk_i}(t)$)

       }
    %   \STATE
           \ \ \ \ \ \ \ \ {***--- Holdout Soft Consensus Convergence round ---  ***}\;
       Receive $\{(Votes_{j_1},Sign_{sk_{j_1}}(t)),\ldots, (Votes_{j_{N_c'}},Sign_{sk_{j_{N_c'}}}(t))\}$\;
       $v_t^i \leftarrow \mathbb{A}_{Consensus}(\{{Votes}_{j_1}, ...,Votes_{j_{ N'_c}}\})$\;
       $w'^i_{t+1} \leftarrow w^i_t - \eta_t \cdot v_t^i$\;
    {***- Consensus round (use Algorand Consensus) agree on common model -***}\;
        $Cons-committee$ $\leftarrow$ ($H(S_t,``C",Sign_{sk_i}(t)) < q_3$)\Comment{ $q_3$ probability of being a committee member}\; 
        \If {$Cons-committee$} {$w^i_{t+1} :=$ Algorand Consensus with $w'^i_{t+1}$ as initial suggestion\;
        Share $w^i_{t+1}$ with all\Comment{incurring additional $nN_{con}$ messages}\;
        }
       }
   }
%   \\
   \getUpdate{($w^i_{t}$): \ \ \  \text{}}{
      Draw $B$ samples uniformly at random from internal dataset\;
      \Return gradient $\nabla L_i(w^i_t)$ to server\;
   }
\end{algorithm}

\section{HoldOut SGD: Theoretical Claims}
\label{theroy}

In this section we prove theoretical properties of \textit{HoldOut SGD}, establishing its convergence under appropriate assumptions and proving the resilience of the committee to Byzantine inputs. 
Our main result is Theorem \ref{theorem-convergnece}. 

\subsection{Assumptions}
\label{assumptions}
We assume that for all $X$, the loss function $L(X,w)$ is differentiable and $\beta$-smooth with respect to $w \in W$ (in other words, it has a $\beta$-Lipschitz gradient).
Recall that if $f(w)$ is $\beta$-smooth then for all $w,w'$ it holds that:
\begin{align}
    f(w')
    \leq
    f(w) + \nabla f(w) \cdot (w'-w) + \frac{\beta}{2} \|w'-w\|^2 \label{eq:smoothness}
    .
\end{align}
We will also assume that each $L(X,\cdot)$ is $G$-Lipschitz, that is $\|\nabla L(X; w)\| \leq G$ for all $X$ and $w$. 
Thus, the variance of the gradient reported by any honest node is bounded by $\mathbb{E}[\|\nabla L_i(w) - \nabla L(w)\|^2] \leq G^2/B$, where $B$ is the node's mini-batch size.
Finally, we assume that the average loss $L(w)$ is $\alpha$-strongly convex, which in particular implies that for all $w$:
\begin{align}
    \|\nabla L(w)\|^2
    \geq
    2\alpha (L(w)-L(w^*)) \label{eq:LP}
    .
\end{align}
Where $w^*$ is optimal minimizer of $L$.

The above assumptions are standard and common in the analysis of convex optimization algorithms (e.g., see \cite{Hazan2016IntroductionTO} for more background and references).
Our analysis focuses on providing guarantees of convergence to a global minimum in the convex case. Some of our arguments can be adapted to show convergence to a critical point in the non-convex setting.

\subsection{Notations}
 The \textit{HoldOut SGD} algorithm takes as input the parameters $\{N_p, N_c, f, B, m_c\}$, where $f$ reflects the expected proportion of Byzantine workers in the population, $m_c$ is the number of samples used by a committee member to perform the holdout evaluation, and $B$ the mini-batch size used by a proposer to generate the reported gradient. Let $j_{(c)}$ be the index of the proposer $j$ selected by some committee member $c$, let $L_c$ be the loss defined over its internal dataset and $v_t$ the final update used at round $t$ (see Eq.~\ref{eq:nable_L}).
 
\subsection{Byzantine Gradient Tolerance}

We define a condition on any aggregation rule to ensure that it is Byzantine-resilient to corrupted gradients at each iteration $t$. 
We show that \textit{HoldOut SGD} satisfies this condition and in the next section use this result to prove that \textit{HoldOut SGD} is Byzantine-resilient and converges close to the optimal solution even under the presence of Byzantine workers.

As suggested by \cite{Blanchard2017MachineLW}, the aggregation rule should output a vector $v_t$ that is not too far from the
actual gradient $\nabla L(w_t)$.
Since the actual gradient points in the direction of steepest ascent, we would like to place a lower bound over the inner-product between the suggested gradient and the actual one. If this inner-product is bounded from below, it limits the ability of a Byzantine worker to cause the aggregation rule to choose an update in a direction too far away from the actual gradient.

We show that for every honest committee member $c$, the updates voted by $c$ are bounded by their distance from the true gradient of $L_c$ (based on the internal dataset of $c$), and use this result to prove that \textit{HoldOut SGD} is Byzantine-resilient.

\begin{definition}
An aggregation rule $\mathbb{A}$ is $\varepsilon$-Byzantine-gradient-tolerant, if it satisfies that:
$$\mathbb{E}[v_t \cdot \nabla L(w)] \ \  \geq \|\nabla L(w)\|^2 - \varepsilon,$$
where $v_t = \mathbb{A}(\nabla L_1(w), ..., \nabla L_N(w))$.
\end{definition}

\begin{proposition}
\label{proposition-1} Assume that $f<\frac{1}{2}$ and that the proposers have an honest majority. Then
\textit{HoldOut SGD} is $(\frac12 \beta G^2\eta_t)$-Byzantine-gradient-tolerant w.r.t.~to any honest committee member $c$ at any iteration t. 
Namely:
%, under $f < \frac{1}{2}$ and an honest majority of proposers. %, with $\varepsilon_t = \frac12 \beta G^2\eta_t$, so that:
$$
    \mathbb{E}[\nabla L_{j_{(c)}}(w_t) \cdot \nabla L_c(w_t)]
    \geq
    \mathbb{E}[\|\nabla L(w_t)\|^2] - \frac12 \beta G^2\eta_t
    .
$$
\end{proposition}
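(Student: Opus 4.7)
My plan is to use the vote cast by an honest committee member $c$ as a certificate: whichever proposer $j_{(c)}$ the voter ranks highest must, by $c$'s scoring rule, produce a simulated one-step loss no larger than that of any honest proposer. The honest-majority assumption on $P_t$ guarantees the existence of at least one honest $h \in P_t$, and the top-ranking rule at $c$ gives
\begin{align*}
L_c\bigl(w_t - \eta_t \nabla L_{j_{(c)}}(w_t)\bigr) \le L_c\bigl(w_t - \eta_t \nabla L_h(w_t)\bigr).
\end{align*}
If $j_{(c)}$ is instead read as an arbitrary element of the top-$N_p(1-f)$ voted set, a short pigeonhole using $f<\tfrac{1}{2}$ and honest majority still supplies an honest $h$ that $j_{(c)}$ dominates in $c$'s ordering.

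Next I would convert this loss inequality into the claimed inner-product inequality. On the right-hand side, the descent lemma (the upper-bound form of $\beta$-smoothness of $L_c$, inherited from the per-sample smoothness assumption) yields $L_c(w_t - \eta_t \nabla L_h(w_t)) \le L_c(w_t) - \eta_t \nabla L_c(w_t) \cdot \nabla L_h(w_t) + \tfrac{\beta \eta_t^2}{2}\|\nabla L_h(w_t)\|^2$. On the left-hand side, the one-sided convexity bound $L_c(w_t - \eta_t g) \ge L_c(w_t) - \eta_t \nabla L_c(w_t) \cdot g$ applied with $g = \nabla L_{j_{(c)}}(w_t)$ matches the linear part. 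Chaining the two bounds through the voting inequality, cancelling $L_c(w_t)$, dividing by $\eta_t$, and substituting $\|\nabla L_h(w_t)\| \le G$ produces
\begin{align*}
\nabla L_c(w_t) \cdot \nabla L_{j_{(c)}}(w_t) \ge \nabla L_c(w_t) \cdot \nabla L_h(w_t) - \tfrac{1}{2}\beta G^2 \eta_t.
\end{align*}
Taking conditional expectation given $w_t$ and using that $\nabla L_c$ and $\nabla L_h$ come from independent mini-batches, each an unbiased estimator of $\nabla L(w_t)$, one has $\mathbb{E}[\nabla L_c(w_t) \cdot \nabla L_h(w_t) \mid w_t] = \|\nabla L(w_t)\|^2$, and the outer expectation closes the argument.

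The step I expect to be the main obstacle is the lower bound on $L_c(w_t - \eta_t \nabla L_{j_{(c)}}(w_t))$, because $\nabla L_{j_{(c)}}$ may be an arbitrary vector supplied by a Byzantine proposer. Using the two-sided smoothness bound here would introduce a $\tfrac{\beta \eta_t^2}{2}\|\nabla L_{j_{(c)}}(w_t)\|^2$ correction that cannot be controlled under the stated assumptions. The natural remedy is the one-sided convexity bound above, which requires $L_c$ itself to be convex; this follows from convexity of each per-sample loss $L(X,\cdot)$, a mild strengthening of the written assumption that only the population loss $L$ is strongly convex, and should be stated explicitly at this step of the proof.
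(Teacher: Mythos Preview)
Your proposal is correct and follows essentially the same route as the paper: obtain the voting inequality $L_c(w_t-\eta_t\nabla L_{j_{(c)}})\le L_c(w_t-\eta_t\nabla L_h)$ for some honest $h$ via the pigeonhole/majority argument, sandwich it between the convexity lower bound (applied to the $j_{(c)}$ side) and the $\beta$-smoothness upper bound (applied to the honest side with $\|\nabla L_h\|\le G$), cancel $L_c(w_t)$, divide by $\eta_t$, and then take expectations using independence and unbiasedness of $\nabla L_c,\nabla L_h$. Your remark that the argument needs convexity of $L_c$ itself (not just strong convexity of the population loss) is well taken; the paper silently invokes ``convexity \ldots of $L_c$'' in its proof without listing per-sample convexity among the stated assumptions, so your explicit flag is an improvement in rigor rather than a deviation in method.
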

Proposition~\ref{proposition-1} can be interpreted as follows: at every round $t$, each honest committee member holds a gradient $\nabla L_c(w_t)$, which is an unbiased estimate of the true gradient $\nabla L(w_t)$. The proposition shows that the presence of a proportion of $f < \frac{1}{2}$ Byzantine proposers cannot impact the voting process of an honest committee member $c$ too much, with respect to its unbiased estimated of the gradient $\nabla L_c(w_t)$.

\begin{proof}
Consider an update $j_{(c)}$ voted for by an honest committee member $c$, at iteration t. By convexity and $\beta$-smoothness of $L_c$, we have for any honest proposer $i$ that
\begin{align*}
    0
    \leq
    L_c(w_t &- \eta_t \nabla L_i(w_t)) - (L_c(w_t) - \eta_t \nabla L_i(w_t) \cdot \nabla L_c(w_t))
    \leq
    \frac12 \beta\eta_t^2 \|\nabla L_i(w_t)\|^2
    ,
\end{align*}
and since $\|\nabla L_i(w_t)\| \leq G$ this implies
\begin{align*}
    \big| L_c(w_t &- \eta_t \nabla L_i(w_t)) - L_c(w_t) + \eta_t \nabla L_i(w_t) \cdot \nabla L_c(w_t) \big|
    \leq
    \frac12 \beta G^2\eta_t^2
    .
\end{align*}
Thus, minimizing $L_c(w_t - \eta_t\nabla L_i(w_t)) - L_c(w_t)$ is equivalent to maximizing the inner-product $\nabla L_i(w_t) \cdot \nabla L_c(w_t)$ up to an additive $\beta G^2\eta_t/2$.

From the definition of $\mathbb{A}_{Holdout}$ it holds for $f < \frac12 $ and an honest majority of proposers, that there exists at least one honest proposer $i$ such that: $$L_c(w_t - \eta_t \nabla L_{j_{(c)}}(w_t)) \leq  L_c(w_t - \eta_t \nabla L_{i}(w_t)).$$
For the above honest proposer $i$ we have:
\begin{align*}
    \nabla L_{j_{(c)}}(w_t) \cdot \nabla L_c(w_t)
    \geq
    \nabla L_i(w_t) \cdot \nabla L_c(w_t) - \frac12 \beta G^2\eta_t
    .
\end{align*}
The proposition follows by taking the expectation of this inequality and noting that $L_i$ and $L_c$ are independent and $\mathbb{E}[\nabla L_i(w_t)] = \mathbb{E}[\nabla L_c(w_t)] = \nabla L(w_t)$.
\end{proof}

\subsection{Honest Majority \label{sec:honest_majority}}
The condition that both voters and proposers sets have a majority of honest nodes is at the core of the \textit{HoldOut SGD} algorithm.
Both Proposition~\ref{proposition-1} and Theorem~\ref{theorem-convergnece} rely on this assumption.
However, since both committee and proposers groups are randomly selected at each training epoch, this condition could fail with a non-zero probability. In what follows we show that this failure probability can be made arbitrarily low via an appropriate choice of committee size.

Given a confidence level $\delta>0$, let:
\begin{equation}
N(T, \delta) =  2 \frac{(1 + 2f)}{(1 - 2f)^2} \ln{\frac{T}{\delta}}.
\end{equation}
The next lemma shows that if the size of the committee at least $N(T, \delta)$, the committee is guaranteed to have a majority of non-Byzantine nodes with probability greater than $1-\delta$.

%We derive a lower bound over the sizes $N$ and $N_c$ of the committee and participants groups, respectively. If both $N$ and $N_c$ respect this lower bound the assumption of an honest majority can be ensured, with high a probability of $1 - \delta$, $\delta \ll 1$, at every iteration t of \textit{HoldOut SGD}.

\begin{lemma}
\label{lemma_honest_majority}
Let $n$ be a set of nodes, and $f < \frac{1}{2}$ be the fraction of Byzantine nodes in $n$.
Let $A_t$ be group of N nodes, selected uniformly at random without replacement from the $n$ nodes at iteration t. Then if $N\geq N(T, \delta)$, the probability that one of the sets $A_1,\ldots,A_T$ has a majority of Byzantine nodes is smaller than $\delta$.  
%And let $Z_t$ be the fraction of Byzantine nodes in $A_t$.
% Then it holds that:
% $$P(\bigcup\limits_{t=1}^{T} \{Z_t \geq \frac{1}{2}\}) \leq \delta$$ 
% For any $N$ that satisfies:
\end{lemma}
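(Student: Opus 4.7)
The plan is to first bound the probability that a single committee $A_t$ fails to have an honest majority, and then take a union bound over the $T$ iterations. Since the $A_t$'s are sampled without replacement from a pool of $n$ workers of which a fraction $f$ is Byzantine, the number $X$ of Byzantine nodes in $A_t$ is a hypergeometric random variable with mean $\mathbb{E}[X]=Nf$. The event of a Byzantine majority is $\{X\geq N/2\}=\{X-Nf\geq N(1/2-f)\}$, and since $f<1/2$ this is a deviation of order $(1-2f)/2$ above the mean, so a Chernoff-type tail bound will do the job.

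The first step is to apply a multiplicative Chernoff bound to $X$. It is standard (via Hoeffding's observation that sampling without replacement is at least as concentrated as sampling with replacement) that the hypergeometric variable satisfies the same Chernoff inequalities as the Binomial$(N,f)$. Writing $N/2=(1+\kappa)Nf$ with $\kappa=(1-2f)/(2f)$ and using $\mathbb{P}(X\geq(1+\kappa)\mu)\leq\exp(-\mu\kappa^2/(2+\kappa))$ with $\mu=Nf$, I expect the exponent to simplify to
\begin{equation*}
\frac{Nf\,\kappa^2}{2+\kappa}=\frac{N(1-2f)^2}{2(1+2f)},
\end{equation*}
yielding $\mathbb{P}(X\geq N/2)\leq \exp\!\bigl(-N(1-2f)^2/(2(1+2f))\bigr)$. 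Plugging in $N\geq N(T,\delta)=2(1+2f)/(1-2f)^2\cdot\ln(T/\delta)$ makes this exponential term at most $\delta/T$.

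The second step is a simple union bound: if each of the $T$ committees has failure probability at most $\delta/T$, then the probability that any of them has a Byzantine majority is at most $\delta$, as required.

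The only mildly subtle point I expect is the reduction from the hypergeometric to the binomial Chernoff form; I would either cite Hoeffding's inequality for sampling without replacement directly, or note that the moment-generating function of the hypergeometric is dominated by that of the corresponding Binomial. Aside from that, the calculation is a routine matching of constants between the Chernoff exponent and the definition of $N(T,\delta)$, and the lemma follows immediately. The statement only covers committees, but I would remark that the identical argument applies to the proposer groups $P_t$ used in Proposition~\ref{proposition-1}, which is the ``honest majority of proposers'' assumption invoked there.
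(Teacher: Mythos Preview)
Your proposal is correct and follows essentially the same route as the paper: identify the number of Byzantine nodes in $A_t$ as hypergeometric with mean $Nf$, apply the multiplicative Chernoff bound $\mathbb{P}(X\geq(1+\kappa)\mu)\leq\exp(-\mu\kappa^2/(2+\kappa))$ with $\kappa=(1-2f)/(2f)$ to get the exponent $N(1-2f)^2/(2(1+2f))$, and finish with a union bound over $T$ rounds. The paper simply invokes ``Chernoff bounds'' for the hypergeometric without comment, whereas you explicitly justify this via Hoeffding's domination of sampling without replacement by sampling with replacement; that is a welcome bit of extra care, but the argument is otherwise identical.
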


As can be observed, the derived lower bound grows logarithmically with $T$, the number of iterations, and $\frac{1}{\delta}$, representing the confidence level.
The lower bound also depends on $f$, the proportion of Byzantine nodes in the population, and grows as $f$ grows closer to $\frac{1}{2}$.

\begin{proof}
Let $X_t$ represent the number of selected Byzantine nodes in $A_t$.
Then 
$X_t$ follows a ${Hypergeometric}(n, n f, N)$ distribution.
From the properties of the Hypergeometric distribution we get:
$$\mu = \Bar{X_t} = N \cdot \frac{n \cdot f}{n} = N \cdot f.$$
From Chernoff bounds we have that for all $\epsilon>0$ it holds that
$P(X_t \geq (1 + \epsilon)\mu) \leq e^{-\frac{\epsilon^2\mu}{2 + \epsilon}}.$
The probability of a Byzantine majority at iteration $t$ can then be bounded as follows:
$$P\left(X_t \geq \frac{N}{2}\right) = P\left(X \leq \left(1 + \Big(\frac{1}{2f} - 1\Big)\right)\mu\right) \leq e^{-\frac{(1 - 2f)^2}{(1 + 2f)}\frac{N}{2}},$$
where $\epsilon = \frac{1}{2f} - 1 \geq 0$ because $f<\frac12$. To upper bound the probability that one of the sets has a Byzantine majority we use the union bound:
$$P\left(\bigcup\limits_{t=1}^{T} \Big\{X_t \geq \frac{N}{2}\Big\}\right) \leq T \cdot e^{-\frac{(1 - 2f)^2}{(1 + 2f)}\frac{N}{2}} \leq \delta,$$
where the last inequality follows from  $N\geq N(T, \delta)$.
%Therefore, if we require that $P(\bigcup\limits_{t=1}^{T} \{Z_t \geq \frac{1}{2}\}) \leq \delta$, we have:
%$$N \geq 2\cdot \frac{(1 + 2f)}{(1 - 2f)^2} \ln{\frac{T}{\delta}}$$
\end{proof}

\subsection{Convergence Analysis}
\label{convergence-analysis}

In this section we analyze the convergence rate of the \textit{HoldOut SGD} algorithm and prove that it is Byzantine-resilient. 
We assume that the loss function $L(X,\cdot)$ is $G$-Lipschitz and $\beta$-smooth for any $X$ and that the expected risk $L$ is $\alpha$-strongly convex. 
For the SGD step-size we take $\eta_t = 1/(2\alpha t)$.

Our main theorem below states \textit{HoldOut SGD} converges at a rate of $O(\frac{\log T}{T})$ to a ball of radius $O(\frac{1}{\sqrt{m_c}})$ around the optimal $w^*$.
Importantly, the theorem holds whenever the fraction of Byzantine workers is $f<0.5$. 

 %The result holds with high probability for any desired confidence $\delta$, where the probability is needed to 

\begin{theorem}
\label{theorem-convergnece}
Let $f < \frac12$ be the fraction of Byzantine nodes and $\delta>0$ a desired confidence level. Then the following holds: if $N_c, N_p \geq N(2T, \delta)$ then with probability greater than $1 - \delta$ the error of \textit{HoldOut SGD} satisfies: 
\begin{align*}
    \mathbb{E}[L(w_T)] - L(w^*)
    =
    O\left( \frac{G^2}{\alpha\sqrt{m_c}} + \frac{\beta G^2}{\alpha^2} \frac{\log{T}}{T} \right)
    ,
\end{align*}
where $w^*$ is the minimizer of $L$.
\end{theorem}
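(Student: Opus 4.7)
The plan is to combine the honest-majority guarantee of Lemma~\ref{lemma_honest_majority} with the per-round Byzantine-gradient-tolerance of Proposition~\ref{proposition-1}, and feed the result into the classical strongly convex SGD recursion with $\eta_t = 1/(2\alpha t)$. First, I would invoke Lemma~\ref{lemma_honest_majority} separately for the proposer committees $\{P_t\}_{t\le T}$ and the voter committees $\{C_t\}_{t\le T}$, each with confidence $\delta/2$; since $N(T,\delta/2)=N(2T,\delta)$, the assumption $N_p, N_c \ge N(2T,\delta)$ plus a union bound produces an event $E$ of probability at least $1-\delta$ on which every $P_t$ and every $C_t$ has an honest majority. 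All subsequent claims are conditioned on $E$, so Proposition~\ref{proposition-1} is applicable at every round.

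Next, I would upgrade Proposition~\ref{proposition-1} from a per-honest-voter statement to a bound on the aggregate $v_t$ defined in~\eqref{eq:nable_L}. By the Union-Consensus construction, every proposer $i\in UC_t$ was approved by at least $N_c(1-2f)$ honest voters, so $v_t$ can be written as a convex combination of proposals each of which was voted for by some honest $c$. Averaging Proposition~\ref{proposition-1} over these voters, and then converting $\nabla L_c(w_t)$ into the true $\nabla L(w_t)$ via Cauchy--Schwarz together with the $m_c$-sample variance bound
\begin{align*}
\bigl\lvert\mathbb{E}\bigl[\nabla L_{j_{(c)}}(w_t)\cdot(\nabla L - \nabla L_c)(w_t)\bigr]\bigr\rvert
\;\le\; G\cdot\sqrt{\mathbb{E}\|\nabla L - \nabla L_c\|^2}
\;\le\; G^2/\sqrt{m_c},
\end{align*}
I expect to obtain
\begin{align*}
\mathbb{E}\bigl[v_t \cdot \nabla L(w_t)\bigm|E\bigr]
\;\ge\;
\mathbb{E}\|\nabla L(w_t)\|^2 - \tfrac12 \beta G^2 \eta_t - G^2/\sqrt{m_c},
\end{align*}
while $\|v_t\|\le G$ since each proposed gradient has norm at most $G$.

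Third, I would run the standard smooth strongly convex SGD analysis. By $\beta$-smoothness of $L$, $\mathbb{E}[L(w_{t+1})] \le \mathbb{E}[L(w_t)] - \eta_t \mathbb{E}[v_t\cdot\nabla L(w_t)] + \tfrac12\beta G^2 \eta_t^2$. Plugging in the previous lower bound and using $\alpha$-strong convexity via~\eqref{eq:LP} gives, for $\Delta_t := \mathbb{E}[L(w_t)] - L(w^*)$,
\begin{align*}
\Delta_{t+1}
\;\le\;
(1 - 2\alpha\eta_t)\Delta_t + \beta G^2 \eta_t^2 + \eta_t\,G^2/\sqrt{m_c}.
\end{align*}
With $\eta_t = 1/(2\alpha t)$, the classical unrolling (cf.~\cite{Hazan2016IntroductionTO}) yields the $O\bigl((\beta G^2/\alpha^2)\log T/T\bigr)$ optimization term coming from $\sum_t \beta G^2 \eta_t^2$, while the $\eta_t G^2/\sqrt{m_c}$ forcing term settles at the stationary level $O\bigl(G^2/(\alpha\sqrt{m_c})\bigr)$ of the contraction, reproducing the two summands in the statement.

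The main obstacle will be the second step. Proposals in $UC_t$ may have been inserted by Byzantine voters yet still enter $v_t$, because they only need to attract $N_c(1-2f)$ honest votes; moreover, each honest voter $c$ picks its top proposals using the noisy $m_c$-sample surrogate $\nabla L_c$ rather than $\nabla L$, so the chosen index $j_{(c)}$ is correlated with the noise $\nabla L - \nabla L_c$. Bookkeeping these correlations so that the aggregate error collapses to $O(G^2/\sqrt{m_c})$ rather than scaling with $N_p$ or $N_c$, while keeping the descent inequality tight enough to yield the stated $(\log T)/T$ rate, is the delicate part.
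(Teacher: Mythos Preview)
Your proposal is correct and follows essentially the same route as the paper: invoke Lemma~\ref{lemma_honest_majority} (twice, via the identity $N(T,\delta/2)=N(2T,\delta)$) to get the honest-majority event, use Proposition~\ref{proposition-1} together with Cauchy--Schwarz and the $G^2/\sqrt{m_c}$ variance bound to control each voted proposal, and feed the result into the strongly convex smooth SGD recursion with $\eta_t=1/(2\alpha t)$.

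The only organizational difference is where the averaging over $UC_t$ happens. You apply $\beta$-smoothness once to $w_{t+1}=w_t-\eta_t v_t$ and therefore need the aggregate bound $\mathbb{E}[v_t\cdot\nabla L(w_t)]\ge\|\nabla L(w_t)\|^2-\tfrac12\beta G^2\eta_t-G^2/\sqrt{m_c}$. The paper instead applies smoothness to each $w_t-\eta_t\nabla L_{j_{(c)}}$ separately, obtains the per-vote descent inequality~\eqref{eq1}, and then uses convexity of $L$ to pass through the average $w_{t+1}=\tfrac{1}{|UC_t|}\sum_{i\in UC_t}(w_t-\eta_t\nabla L_i)$. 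Both produce the identical recursion $\Delta_{t+1}\le(1-2\alpha\eta_t)\Delta_t+\beta G^2\eta_t^2+\eta_t G^2/\sqrt{m_c}$; your version is marginally cleaner since it does not need the extra convexity step.

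The ``main obstacle'' you flag---the correlation between $j_{(c)}$ and $\nabla L_c-\nabla L$---is precisely what the paper also resolves with the Cauchy--Schwarz step $\mathbb{E}[\nabla L_{j_{(c)}}\cdot(\nabla L_c-\nabla L)]\le\sqrt{\mathbb{E}\|\nabla L_{j_{(c)}}\|^2}\sqrt{\mathbb{E}\|\nabla L_c-\nabla L\|^2}\le G\cdot G/\sqrt{m_c}$, which requires no independence. So the obstacle is real but you already named the tool that handles it.
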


\begin{proof}
Let us first observe the step taken by a single update $j_{(c)}$ voted for by some honest committee member $c$.
Since $L$ has Lipschitz first derivative we know that:
\begin{align*}
L(w_t &- \eta_t \nabla L_{j_{(c)}})  \leq L(w_t) -\eta_t \nabla L_{j(c)}(w_t)\cdot \nabla L(w_t) + \frac12 \beta\eta_t^2 \|\nabla L_{j(c)}\|^2.
\end{align*}
Writing this differently:
\begin{align*}
&L(w_t - \eta_t \nabla L_{j_{(c)}})
\\
&\leq L(w_t) -\eta_t \nabla L_{j(c)} \cdot (\nabla L(w_t) - \nabla L_c(w_t) + \nabla L_c(w_t)) + \frac12 \beta\eta_t^2 \|\nabla L_{j(c)}\|^2 \\
&= L(w_t) -\eta_t \nabla L_{j(c)} \cdot (\nabla L(w_t) - \nabla L_c(w_t))  -\eta_t \nabla L_{j(c)} \cdot \nabla L_c(w_t) + \frac12 \beta\eta_t^2 \|\nabla L_{j(c)}\|^2.
\end{align*}
Taking the expectation conditioned on all randomness before iteration $t$ (and using $\mathbb{E}_t[\cdot]$ to denote this), and using the bound over the stochastic gradients, we get:
\begin{align*}
&\mathbb{E}_t[L(w_t - \eta_t \nabla L_{j_{(c)}})]
\\
&\leq L(w_t) + \eta_t \mathbb{E}_t[\nabla L_{j(c)} \cdot (\nabla L_c(w_t) - \nabla L(w_t))] - \eta_t  \mathbb{E}_t[\nabla L_{j(c)} \cdot \nabla L_c(w_t)] + \frac12 \beta\eta_t^2G^2
.
\end{align*}
%
% Where $\mathbb{E}_t[x] = \mathbb{E}[x|w_t]$ is the expectation at iteration t.  \sazulay{Tomer - makes sense to define this formally as E | w_t ?}
%
Using Proposition~\ref{proposition-1}:\footnote{Note that the conditions of \ref{proposition-1} hold here with probability greater than $1-\delta$ because $N\geq N(\delta)$ and Lemma~\ref{lemma_honest_majority}.} 
\begin{align}\label{eq1}
&\mathbb{E}_t[L(w_t - \eta_t \nabla L_{j_{(c)}})]
\\
&\leq L(w_t) + \eta_t \mathbb{E}_t[\nabla L_{j(c)} \cdot (\nabla L_c(w_t) - \nabla L(w_t))] - \eta_t  \|\nabla L(w_t)\|^2 + \frac12 \beta\eta_t^2G^2 + \frac12 \beta\eta_t^2G^2 
\nonumber\\
&\leq L(w_t) + \eta_t \sqrt{\mathbb{E}_t[\|\nabla L_{j(c)}\|^2] \, \mathbb{E}_t[\|\nabla L_c(w_t) - \nabla L(w_t)\|^2]} - \eta_t  \|\nabla L(w_t)\|^2 + \frac12 \beta\eta_t^2G^2 \nonumber\\
&\leq L(w_t) + \frac{\eta_tG^2}{\sqrt{m_c}} - \eta_t  \|\nabla L(w_t)\|^2 + \beta\eta_t^2G^2
,
\end{align}
where we use the Cauchy-Schwartz inequality and the bounds over the stochastic gradients in the last two transitions.
Using the definition of \textit{HoldOut SGD} we can write:
$$w_{t+1} = w_t - \eta_t \cdot v_t = w_t - \eta_t \frac{1}{|UC_t|}\sum_{i \in UC_t} \nabla L_i(w_t) =  \frac{1}{|UC_t|}\sum_{i \in UC_t} (w_t - \eta_t \nabla L_i(w_t)).$$
Since $L$ is convex:
$$L(w_{t+1}) \leq \frac{1}{|UC_t|}\sum_{i \in UC_t} L(w_t - \eta_t \nabla L_i(w_t)).$$
Using Lemma~\ref{lemma_honest_majority} with $N(2T, \delta)$ we get that with a confidence level of $\delta$ both proposers and committee groups are populated with an honest majority of members. 
Therefore, each $\nabla L_i(w_t), i \in UC_t$ was voted by at least one honest committee member $c$ and can therefore be written as $\nabla L_{i(c)}(w_t)$.
Using Eq.~\ref{eq1} we can write:
\begin{align*}
\mathbb{E}_t[L(w_{t+1})] &\leq \frac{1}{|UC_t|}\sum_{i \in UC_t} L(w_t - \eta_t \nabla L_{i(c)}(w_t)) \\
&\leq \frac{1}{|UC_t|}\sum_{i \in UC_t} \big{[}L(w_t) +  \frac{\eta_tG^2}{\sqrt{m_c}} - \eta_t  \|\nabla L(w_t)\|^2 + \beta\eta_t^2G^2\big{]}\\
& = L(w_t) + \frac{\eta_tG^2}{\sqrt{m_c}} - \eta_t  \|\nabla L(w_t)\|^2 + \beta\eta_t^2G^2
.
\end{align*}
Subtracting $L(w^*)$ from both sides and taking the expectation before iteration t, we have:
\begin{align*}
\mathbb{E}[L(w_{t+1}) - L(w^*)] &\leq  \mathbb{E}[L(w_{t}) - L(w^*)] - \mathbb{E}[\eta_t  \|\nabla L(w_t)\|^2] + \beta\eta_t^2 G^2 + \frac{\eta_t G^2}{\sqrt{m_c}}
.
\end{align*}
Since $L$ is $\alpha$-strongly convex, it respects Eq.~\ref{eq:LP} in \secref{assumptions}:
\begin{align*}
\mathbb{E}[L(w_{t+1}) - L(w^*)] &\leq (1 - 2\alpha\eta_t) \mathbb{E}[L(w_{t}) - L(w^*)] + \beta\eta_t^2 G^2 + \frac{\eta_t G^2}{\sqrt{m_c}}
.
\end{align*}
Unfolding the recursion, we obtain
\begin{align*}
    &\mathbb{E}[L(w_T) - L(w^*)]
    \\
    &\leq 
    \Bigg[ \prod_{s=1}^{T-1} (1 - 2\alpha\eta_s) \Bigg] \mathbb{E}[L(w_1)-L(w^*)] 
    +
    \sum_{t=1}^{T-1} \Bigg[ \prod_{s=t+1}^{T-1} (1 - 2\alpha\eta_s) \Bigg] \Bigg( \frac{G^2 \eta_t}{\sqrt{m_c}} + \beta G^2 \eta_t^2 \Bigg)
    .
\end{align*}
Now we set $\eta_t = 1/(2\alpha t)$ and observe that 
$\prod_{s=t+1}^{T-1} (1 - 2\alpha\eta_s) = \frac{t}{T-1},$
thus
\begin{align*}
    \mathbb{E}[L(w_T) - L(w^*)]
    \leq 
    \sum_{t=1}^{T-1} \frac{t}{T-1} \Bigg( \frac{G^2 \eta_t}{\sqrt{m_c}} + \beta G^2 \eta_t^2 \Bigg)
    \leq
    \frac{G^2}{2\alpha\sqrt{m_c}} + \frac{\beta G^2}{4\alpha^2} \frac{\log{T}}{T-1} 
    .&
\end{align*}
\end{proof}

%\textit{Interpreting Theorem~\ref{theorem-convergnece}}: The result suggests that \textit{HoldOut SGD} with a decreasing step size of $\eta_t = O(\frac{1}{t})$ converges at a rate of $O(\frac{log t}{t})$ to a noise ball of size proportional to $O(\frac{1}{\sqrt{m_c}})$ away from the optimal minimizer $w^*$.
\noindent\textbf{Additional observations regarding Theorem~\ref{theorem-convergnece}}: 
(i)~The theorem is stated with high probability because the selection of committees is random, and may thus potentially result in committees with a Byzantine majority. However, the probability of this bad event can be made arbitrarily low by choosing $N_p,N_c$ as in the theorem (see also \secref{sec:honest_majority});
(ii)~Controlling the number of samples $m_c$ used by the committee to evaluate and vote over the suggested updates at each round can allow us to reduce the impact of the relevant term in Theorem~\ref{theorem-convergnece} and converge closer to the optimal solution.

\section{Empirical Results}
\label{results}
We implemented our \textit{HoldOut SGD} algorithm as well as the methods described in Section~\ref{existing-rules}, and simulated Byzantine behavior using the attack described in Section~\ref{attack}. We report the results of existing methods and the \textit{HoldOut SGD} algorithm over the MNIST \cite{MNIST1998} and CIFAR-10 \cite{Krizhevsky2009LearningML} datasets.

\subsection{Implementation Details}
\subsubsection{\textbf{Models, Baselines and Data Generation}}
For MNIST, we use a fully connected network with 1 hidden layer, 784 dimensional flattened input, a 100-dimensional
hidden layer, and a 10-dimensional output, trained with cross-entropy loss objective and using ReLU activations.
The model was trained for 100 epochs with mini-batch size of 83 (as selected by \cite{Mhamdi2018TheHV, Baruch2019ALI}), a learning rate of 0.1 and no momentum.

For CIFAR-10, we use the LeNet-5 network \cite{LeCun1998GradientbasedLA} which is constructed from the following layers: (32x32x3) input, a convolutional layer
with kernel size: 5 $\times$ 5, 6 feature maps and a stride of 1, max-pooling layer
of size 2 $\times$ 2, a second convolutional layer with kernel 5 $\times$ 5, 16
feature maps and a stride of 1, a final max-pooling layer identical to the first one, followed by two fully connected layers of sizes 120 and 84 respectively, and an output layer of size 10. ReLU activations were used in all layers.
The model was trained for 1000 epochs with mini-batch size of 256, a learning rate of 0.1 and no momentum.

We follow the work of \cite{Baruch2019ALI} and \cite{Mhamdi2018TheHV} in the choice of the models, targeting simple architectures to demonstrate the effect of the Byzantine workers.

We compare \textit{HoldOut SGD} to two recent Byzantine-resilient aggregation rules, \textit{Krum} \cite{Blanchard2017MachineLW} and \textit{Coordinate-wise Trimmed Mean} \cite{Yin2018ByzantineRobustDL}. The current State-of-the-art is the \textit{Bulyan} method \cite{Mhamdi2018TheHV}, which is the evolution of both. It combines both aggregation algorithms and was shown in \cite{Baruch2019ALI} to suffer from similar vulnerabilities. Furthermore, \textit{Bulyan} provides theoretical guarantees to be Byzantine-resilient only up to $f < \frac{1}{4}$. Therefore, we chose to not include it in the empirical results of this paper. Finally, we also compare to the simplest aggregation of averaging the gradients (namely, standard SGD).

%\subsubsection{\textbf{Data Generation}} 
Data was generated as follows. A pool of 100 nodes was generated, where each node sampled $m$ examples from the dataset as its private internal dataset.
For MNIST we used $N_p = N_c = 30$ and an expected Byzantine rate of $f = 0.33$ and $m = 2000$.  For CIFAR-10 we used $N_p = N_c = 12$ and an expected Byzantine rate of $f = 0.33$ and $m = 1000$.
The choice of $f = 0.33$ is common practice, since it also represents the theoretical bounds of Byzantine-agreement algorithms.

\subsubsection{\textbf{Adversarial Attack over the Parameter Space} \label{attack}} The attack was identical for all Byzantine nodes selected as proposers, and was used over all aggregation methods and on both datasets. Byzantine committee members, relevant for the \textit{HoldOut SGD} algorithm only, were designed for a powerful attack, creating a coalition that votes for all Byzantine proposers first, followed by a remainder of random votes for honest proposers to reach the $N_p(1- f)$ vote count. 
 Note that this attack is a rather extreme case, which we do not expect in real-world settings. Thus, in typical scenarios, actual performance of \textit{HoldOut SGD} is expected to be much better than that reported here.

A general assumption in distributed learning is that the different datasets held by all honest workers are i.i.d. in nature and therefore can be represented by some normal distribution. Due to this distribution of the data, an adversary can estimate the mean and standard deviation of the reported updates from the last training epoch.  As the learning process converges, this estimation becomes more and more relevant to the distribution of the updates in the current epoch \cite{Bagdasaryan2018HowTB} allowing the attacker to calculate a \textit{margin of poisoning} \cite{Mhamdi2018TheHV}.
This margin represents the degree to which the attacker can perturb the estimated mean of honest parameter updates, while still remaining close enough to the mean so it is selected by the Byzantine-resilient aggregation rule. All of the existing attacks in parameter space rely on some concept in the spirit of the \textit{margin of poisoning}, aiming for the maximal undetected perturbation away from the mean. We focus on a similar attack strategy  \cite{Baruch2019ALI} where the adversary perturbs all parameters under the constraint of remaining as close as possible to the mean.

\begin{algorithm}[h]
   \caption{Attack over the Parameter Space: code for Byzantine worker $i$.}
   \label{alg:attack}
   
   \SetKwProg{getUpdate}{Function \emph{getUpdate}}{}{end}
   \getUpdate{($w_{t}$):}{
   Estimate $\mu_t\in\mathbb{R}^d$, $\sigma_t\in\mathbb{R}^d$, $\gamma_{max, t}\in\mathbb{R}$\;
%   \For{j $\in$ [d]}{
%        $(\Vec{V_t})_j \leftarrow (\mu_t)_j + \gamma_{max, t} \cdot (\sigma_t)_j$\;
    Set $\Vec{V_t} \leftarrow \mu_t + \gamma_{max, t} \sigma_t$\;
 %  }
   \Return $\Vec{V_t}$
   }
\end{algorithm}

At each training epoch the adversary estimates $\mu_t$ and $\sigma_t$, the mean and standard deviation of the gradient updates reported by the honest workers $\{\nabla L_{i_1}(w_t), ..., \nabla L_{i_{N_p(1-f)}}(w_t)\}$.
This estimate can also be accomplished by a non-omniscient adversary, by controlling a small segment of corrupted nodes \cite{Baruch2019ALI}. In the experiments we performed, we assumed an omniscient adversary for the purpose of worst case analysis. The adversary then estimates $\gamma_{max, t}$ (see below) and uses it to calculate the final reported Byzantine update $\Vec{V_t}$ (see Algorithm~\ref{alg:attack}). The factor $\gamma_{max, t}$ is for the aggregation rule under attack and is chosen to be maximal, under the constraint of keeping the reported Byzantine update $\Vec{V_t}$ close enough to $\mu_t$ as possible to fool the aggregation rule into choosing $\Vec{V_t}$. Baruch et al. \cite{Baruch2019ALI} showed that $\gamma_{max, t} \approx 1.75$ is effective, where this value was derived from the CDF of the normal distribution.
In our empirical experiments we allow the omniscient attacker to evaluate  $\gamma_{max, t}$ at each round by running the aggregation rule internally and inflating $\gamma_{max, t}$ up to the point $\Vec{V_t}$ is no longer chosen by the aggregation rule.

\begin{figure}[t]
\vskip 0.2in
\begin{center}
\centerline{\includegraphics[width=\columnwidth]{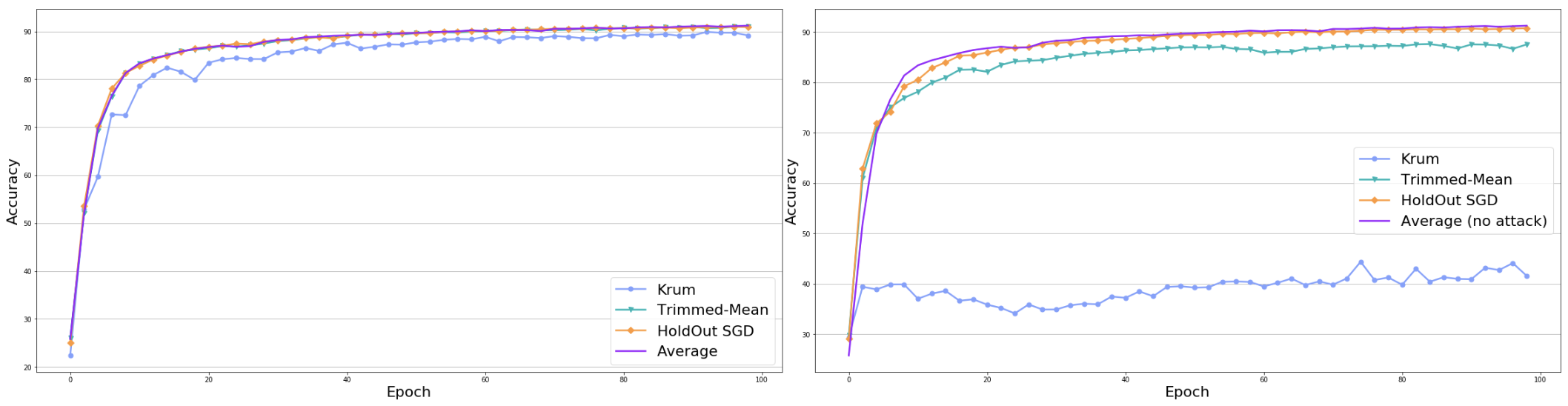}}
\caption{Accuracy during training for the MNIST dataset. Four aggregation rules are compared. {\bf Left}: A Byzantine-free setting. {\bf Right}: $0f=33\%$ Byzantines. The \textit{Average (no attack)} curve corresponds to the non Byzantine setting, and therefore an upper bound on accuracy.}
\label{mnist}
\end{center}
\vskip -0.2in
\end{figure}

\subsection{Results}
We first validated the convergence of the \textit{HoldOut SGD} algorithm by applying it to the MNIST dataset and comparing its convergence to other methods, in a Byzantine-free setting ($f = 0$).
Figure~\ref{mnist}(left) shows that \textit{HoldOut SGD} converges as fast as the average and \textit{Trimmed-Mean} aggregation rules, and all outperform \textit{Krum}.

Next, we evaluated learning under the attack of \secref{attack}. Results are shown in Figure~\ref{mnist}(right). As can be seen, except for \textit{HoldOut SGD}, the attack over the parameter space was enough to damage all other methods. \textit{Trimmed-Mean} suffered a 4\% decrease in test accuracy compared to accuracy without attack,  while \textit{Krum} converged into a completely ineffectual model suffering over 50\% decrease in test accuracy.

Finally, we ran the learning process over the CIFAR-10 dataset, with the Byzantine setting of \secref{attack}. Figure~\ref{cifar-10} shows that even though \textit{HoldOut SGD} suffers an impact under this crafted attack, it still outperforms all other methods. While \textit{Trimmed-Mean} suffered from  $~50\% $ decrease in test accuracy, compared to the average aggregation rule under no attack, and \textit{Krum} suffered over 65\%, \textit{HoldOut SGD} resulted in about 15\% decrease at the end of the learning process.

\begin{figure}[ht]
\vskip 0.2in
\begin{center}
\centerline{\includegraphics[width=0.8\columnwidth]{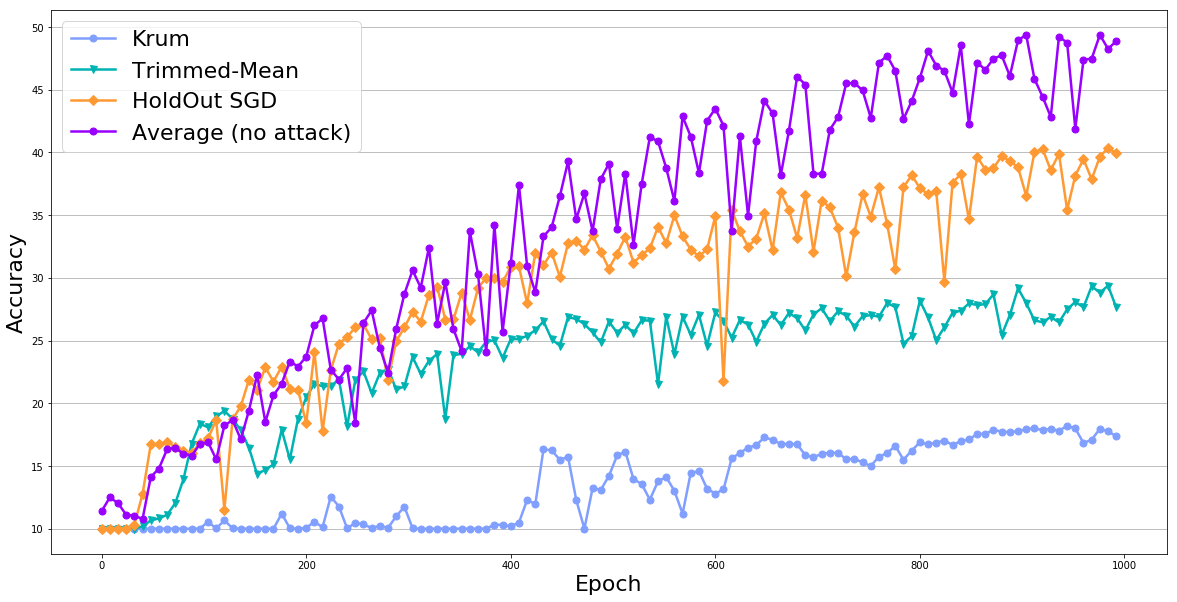}}
\caption{CIFAR-10: test accuracy up to training epoch 1000, comparing the performance of the different aggregation rules, under an actual Byzantine proportion of 33\%. The \textit{Average (no attack)} stands as reference to the performance under a Byzantine-free environment.}
\label{cifar-10}
\end{center}
\vskip -0.2in
\end{figure}

\section{Concluding remarks}
\label{conclusions}
Developing distributed SGD implementations that are
Byzantine-resilient and efficient is an important goal, given the growth in scale of user-data and the corresponding emerging privacy concerns.
A Byzantine tolerant SGD algorithm might be useful not only against a dedicated adversary whose goal is to generate an ineffectual model, but can also be useful in reducing the impact of biased data originating from unreliable sources.

Our theoretical results for Byzantine tolerance consider worst case attacks (e.g., including coalitions, omniscient Byzantine nodes having perfect estimation of gradients for honest workers etc.). Under the assumption that the target loss function is convex we provide convergence guarantees that tolerate up to half the workers being Byzantine ($f < \tfrac12$).

Our theoretical guarantees are for the convex case, since this is currently the main setting in which convergence to global optima can be proven. Obtaining global optimality results for non-convex settings is a challenging open question, and results are only available for rather limited settings such as linear networks, matrix completion, or very wide networks \cite{Arora2018ACA, Jacot2018NeuralTK, Ge2016MatrixCH}. That said, \textit{HoldOut SGD} is applicable to non-convex optimization and we show empirically that in these cases it is more robust to attacks than strong recent baselines.

\remove{
\textit{HoldOut SGD} borrows the concept of a committee voting on updates from Algorand \cite{Chen2019AlgorandAS} in order to achieve Byzantine tolerant selection of honest gradient updates, discarding the malicious updates. 
We leave for future work the analysis of the effects of different parameters (e.g., the size of the committee, the number of samples each committee member uses) on the accuracy of the model, and the time complexity of the algorithm.\agtodo{Didn't we actually do this analysis in Theorem 5... We can just discard this paragraph. Unless it is meant for the full decentralized method.}}

\textit{HoldOut SGD} was designed to tackle the Byzantine attacks aiming to cause the learning process to converge to a poor accuracy model. We did not cover the type of attacks referred to as ``backdooring'' where the adversary is trying to insert a hidden backdoor into the learned model, while preserving its performance over target test set.

\remove{
There are several ways in which it is possible to expand our approach. First, there is a potential risk in the way proposers and committee members are randomly selected. The selection is not secret enough, and their members can be exposed to attacks. We suggest to overcome this difficulty by using design principles inspired by Algorand \cite{Chen2019AlgorandAS}. Specifically, using the cryptographic techniques for committee and participants selection \cite{Ouroboros,Nakamoto,Chen2019AlgorandAS}, to gain a robustness property.  Their identity may not be revealed until after they have performed their task (updating the gradient, or voting on the updates). This makes it harder to direct attacks specifically at workers that are either participants or committee members.

As a farther step, in future work, our goal is to develop fully distributed SGD algorithms that eliminate the central parameter server. This is motivated by the desire to have a more democratic learning algorithm, where there is no one entity that governs the algorithms that run on the central server. In this case, we will design a distributed learning scheme in which there are several server nodes, playing the role of the parameter server, some of which may be Byzantine (at most one third of them). This is in the spirit of replicated state machine paradigm \cite{Lamport84usingtime,Schneider90replicating} but here a third step in each iteration is required for the servers to agree on the gradient and the model parameters for the next iteration. In the same way that federated blockchains take the governance power from one central entity (e.g., a bank) and give it to a group of entities.  Here, a group of entities will govern the learning process, thus turning the algorithm into a federated, federated learning (the first federated corresponds to the distribution of the governance, and the second for federated learning). 
}

Our fully decentralized implementation in \secref{sec:fully} offers the opportunity for fully distributed SGD algorithms that eliminate the central parameter server. This is motivated by the desire to have a more democratic learning algorithm, where there is no one entity that governs the algorithms that run on the central server. In the same way that federated blockchains take the governance power from one central entity (e.g., a bank) and give it to a group of entities.  Here, a group of entities will govern the learning process, thus turning the algorithm into a federated, federated learning (the first federated corresponds to the distribution of the governance, and the second for federated learning). We leave for future work a full evaluation of this approach, as well as further analysis of its computational and statistical properties.

\section{Acknowledgements}
This research is supported by the Blavatnik Computer Science Research Fund and by the European Research Council
(ERC) under the European Unions Horizon 2020 research
and innovation programme (grant ERC HOLI 819080). 
\bibliographystyle{unsrt}  
\bibliography{references}

\end{document}